\newcolumntype{L}{>{\RaggedRight}X}
\newcommand{\removelatexerror}{\let\@latex@error\@gobble}
\pgfplotsset{compat=newest}
\newtheorem{theorem}{Theorem}
\newtheorem{lemma}[theorem]{Lemma}
\newtheorem{definition}{Definition}
\begin{document}
\interfootnotelinepenalty=10000

\sloppy

\title{
Optimal Latency for Partial Synchronous BFT}

\title{\Large \bf 
VBFT: Veloce Byzantine Fault Tolerant Consensus for Blockchains
}

\author{
{\rm Mohammad M. Jalalzai}\\
The University of British Columbia
\and
{\rm Chen Feng}\\
The University of British Columbia

\and 
{\rm Victoria Lemieux}\\
The University of British Columbia
}
\maketitle
\begin{abstract}
    Low latency is one of the most desirable features of partially synchronous Byzantine consensus protocols. Existing low-latency protocols have achieved consensus with just two communication steps by reducing the maximum number of faults the protocol can tolerate (from $f = \frac{n-1}{3}$ to $f = \frac{n+1}{5}$), \textcolor{black}{by relaxing protocol safety guarantees}, or by using trusted hardware like Trusted Execution Environment. Furthermore, these two-step protocols  don't support rotating leader and low-cost view change (leader replacement), which are important features of many blockchain use cases. In this paper, we propose a protocol called VBFT which achieves consensus in just two communication steps without scarifying desirable features. In particular, VBFT tolerates $f = \frac{n-1}{3}$ faults (which is the best possible), guarantees strong safety for honest leaders, and requires no trusted hardware. Moreover, VBFT supports leader rotation and low-cost view change, thereby improving prior art on multiple axes.
   
\end{abstract}
\pagestyle{plain}

\section{Introduction}\label{Introduction}

 Byzantine faults are arbitrary faults including software bugs, malicious attacks, and collusion among malicious processors (nodes), etc. Byzantine fault-tolerant protocols (e.g., \cite{Castro:1999:PBF:296806.296824, SBFT, Jalal-Window, Proteus1, HotStuff, ByzantineTolerantGeneralizedPaxos, Aardvark, avarikioti2021fnfbft}) achieve fault tolerance by replicating a state, which is often referred to as State Machine Replication (SMR) in the literature. BFT-based protocols have been used in intrusion-tolerant services \cite{Fault-Tolerance-System, Upright-Services, BFT-Critical-Infrastructure} including databases \cite{BFT-DataBases}. Recently, BFT-based consensus protocols have been actively used in blockchain technology \cite{SBFT, Jalal-Window, HotStuff,jalalzai2021fasthotstuff, Proteus1, Algorand, Casper}. In all of these use cases, the underlying BFT protocol must remain efficient. In particular, the protocol response time should be short. This helps the client requests to get executed and responses returned to the client faster, thereby greatly improving the user experience.

In BFT-based consensus protocols, one of the most important aspects that affect protocol latency is the number of communication steps. This is even more important when the protocol operates in a WAN (Wide-Area-Network) environment, as the latency for each communication step might be several hundred times higher than that in a LAN (Local-Area-Network) environment. \textcolor{black}{Various studies have shown that high latency in network applications can result in customer dissatisfaction and huge losses in revenue \cite{Akamai-Latency-Report, Amazon-Latency-Report}.  
Hence, a protocol with lower latency may generate higher revenue for the stakeholders in the network by keeping the users satisfied with its performance.
}

Generally, BFT-based consensus protocols in the partially synchronous mode\footnote{There is an unknown maximum bound on the message delay.} operate in three communication steps \cite{Castro:1999:PBF:296806.296824,BFT-SMART,Aardvark} or more \cite{HotStuff,jalalzai2020hermes,jalalzai2021fasthotstuff,SBFT} during normal execution or in the absence of failure. Here,  the normal execution latency is an important metric because
the worst-case latency of partially synchronous BFT consensus protocols can be unbounded. 
Hence, 
 BFT-based consensus protocols that operate with just two communication steps are regarded as \emph{fast} BFT consensus protocols \cite{OptimisticBFT3f, FastBFT,Revisiting-Optimal-Resilience-of-Fast-Byzantine-Consensus}.
 Current solutions, however, have one or more limitations 
(as summarized in Table~\ref{Table: Protocol-Comparision}), including lower fault tolerance \cite{FastBFT,GoodCase-Latency-PODC,Revisiting-Optimal-Resilience-of-Fast-Byzantine-Consensus}, quadratic view-change complexity \cite{OptimisticBFT3f,Castro:1999:PBF:296806.296824, Kotla:2008:ZSB:1400214.1400236, SBFT, castro01practicalPhD} in terms of signature verification cost, relaxed safety\footnote{A block from honest node can only remain in the chain if more than $f+1$ honest nodes commit it. on the other hand, during a strong safety, a block will be added to the chain permanently if only one honest node commits it.} \cite{Kotla:2008:ZSB:1400214.1400236}, and the use of Trusted Execution Environment (TEE) \cite{MinBFT}.
 
 In this paper, we propose a fast BFT consensus protocol called VBFT that improves prior art in multiple dimensions. First, VBFT can achieve consensus during normal execution with $n=3f+1$ nodes. That is, it tolerates the {maximum} possible number of Byzantine faults ($f = \frac{n-1}{3}$), thereby providing the best possible resilience achieved by a partially synchronous  Byzantine protocol \cite{Fischer:1985:IDC:3149.214121}. 
 Second, VBFT does not require the use of any trusted hardware (e.g., TEEs). 
 Third, VBFT has linear authenticator (signature) verification complexity, which allows it to change the leader efficiently. 
 Fourth, VBFT has strong safety guarantees when the leader is honest. This means if a single honest node commits a block of an honest leader, it is guaranteed that  all other nodes will commit the same block (at the same sequence in the chain) eventually.  
 The last two properties make VBFT well-suited for blockchain use cases. 
 For example, the leader rotation not only makes the protocol egalitarian (by providing an opportunity for every node to be chosen as a leader),
 but also provides a natural defense against Denial-of-Service (DoS) attacks when combined with random leader selection mechanism. Also, providing strong safety for an honest block will make sure that an honest leader receives a reward for its proposal (as the proposal won't be revoked if it is committed by at most $f$ honest nodes).

 A question arises naturally: what is the price paid by VBFT in order to achieve the above desirable properties? It turns out that VBFT only guarantees relaxed safety (rather than strong safety) for a block proposed by a Byzantine leader. Specifically, a Byzantine leader may revoke its latest block during view change (when it is replaced) if the block is committed by less than $f+1$ honest nodes. We believe that this price is acceptable  for the following reasons.

 First, even if a Byzantine leader revokes its latest block, it cannot create any double spending\footnote{A request is considered committed by a client, which is then later revoked.}, because this block will not be considered committed by any clients.
 Second,  revoking a block proposed by a Byzantine leader will not lower the chain quality\footnote{Chain quality is defined as the fraction of blocks added by honest leaders in the blockchain.}, which is an important metric related to fairness. By contrast, revoking a block proposed by an honest leader may lower the chain quality. In other words, we will not reduce fairness by only guaranteeing strong safety for honest leaders.
 Third, a block revocation is only possible if the leader node performs equivocation (proposing conflicting blocks). 
 In reality, equivocation attacks can easily be detected . Once an equivocation is detected, the culprit leader can be blacklisted and so no equivocation will take place from this leader in the future. 
\textcolor{black}{In other words, the cost of performing equivocation by a Byzantine leader is so high that it is not in its best interest to get blacklisted and not be able to participate in consensus in the future. Moreover, if this Byzantine leader has a stake in the network, it will lose its stake\footnote{In the Proof-of-Stake incentive mechanism, nodes have a stake in the network in the form of tokens. If a node behaves maliciously then its stake will be slashed. But a node is rewarded when it participates in consensus (proposes a block that is eventually added to the chain).} due to such misbehavior.}

To sum up, our VBFT protocol exhibits the following desirable properties:
\begin{itemize}
    \item Optimal latency during normal protocol execution
    \item Maximum fault tolerance
    \item Strong safety guarantee for honest leader nodes
    \item Efficient view change
    \item Rotating leader paradigm
    \item Simple by design
\end{itemize}

\begin{figure}
    \centering
    \includegraphics[width=6cm,height=4cm]{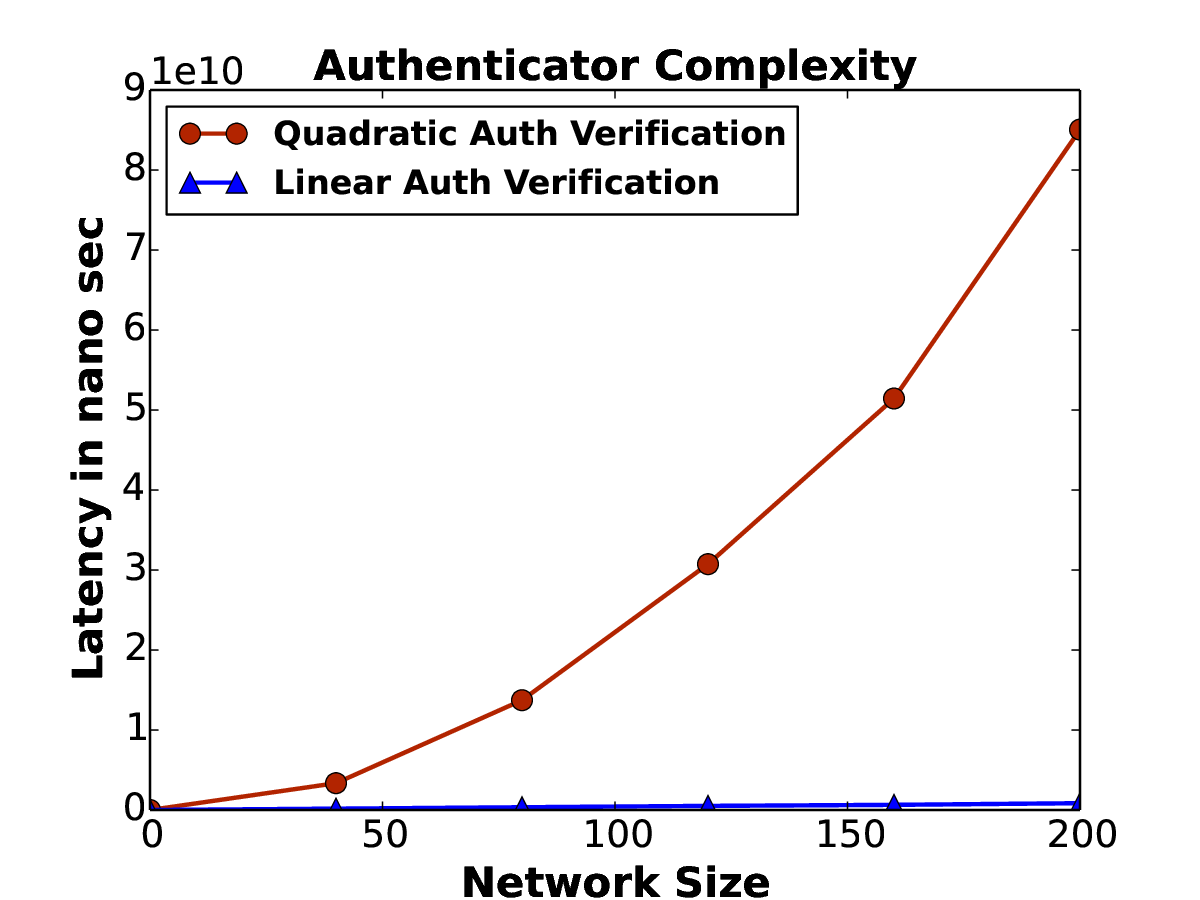}
    \caption{\textbf{Linear vs Quadratic Authenticator Complexity}.}
    \label{fig:Auth-Complexity}
\end{figure}

The remainder of this paper is organized as follows:
Section \ref{sec:overview} provides an overview of VBFT and compares it with prior art.
Section \ref{Section: System Model} presents the system model, definitions, and preliminaries. Section \ref{Section: Protocol} presents  protocol details. 
Section \ref{Section: Proof} provides formal proofs for VBFT. 
Section \ref{Section: Conclusion} concludes the paper.

\section {Overview and Prior Work}\label{sec:overview}

\subsection{Overview}

Our VBFT protocol has been optimized over several properties. First, it achieves consensus in just two communication steps while providing optimal fault tolerance. Fault tolerance improves the protocol resilience in comparison to the fast BFT protocols that have trade fault tolerance for lower latency. Another important property that VBFT has improved is the view change  (when a leader is replaced) performance. View change authenticator complexity has been reduced by $O(n)$ without using TEE or sacrificing fault tolerance. Figure \ref{fig:Auth-Complexity} shows the time taken to verify aggregated BLS signatures for different network sizes (with linear vs quadratic authenticator complexity). From Figure \ref{fig:Auth-Complexity} we can see how $O(n)$ signature verification cost of VBFT will help it to achieve efficient view change in comparison to the other protocols with quadratic authenticator verification \cite{castro01practicalPhD, Castro:1999:PBF:296806.296824, SBFT}. 

Safety is an important property of consensus protocols. We informally define two types of safety here, namely, strong safety and relaxed safety. Strong safety means when a single honest node commits a value, all other honest nodes will eventually commit the same value. In a protocol with relaxed safety a value will never be revoked (will be committed by all honest nodes eventually) if $f+1$ honest nodes commit it. This means if a value is committed by less than $f+1$ honest nodes, it is possible that the value is revoked due to the network asynchrony or malicious behavior. If the revoked value was proposed by an honest node, then it reduces the chain quality. Lower chain quality also allows Byzantine nodes to delay honest transactions. This also means that blockchain now contains a higher percentage of proposals from Byzantine leaders/leaders.

The chain quality is a property that also affects the incentive mechanism and eventually the security of a consensus protocol. In BFT consensus, chain quality becomes even more important when the votes are weighted based on the amount of stake each node (validator) holds. In this case, the assumption is that the Byzantine nodes must hold less than one-third of the total stake of the network. Therefore, a proposal will be accepted once a sufficient number of votes for it is received such that the total stake of voters is more than two-thirds of the network. Hence, when the chain quality of a protocol degrades, the number of Byzantine blocks is more in the chain than it needs to be. Since Byzantine nodes get a reward for adding a block to the chain, the rate at which the stake of Byzantine nodes increases will be faster than the rate of honest nodes.  Once the stake of Byzantine nodes gets more than one-third of the total stake, the protocol cannot guarantee liveness (cannot make progress).
Therefore, a consensus protocol needs to make sure the chain quality is not degraded. In other words, honest nodes should be able to add their blocks to the blockchain as much as possible. Therefore, relaxed safety can be exploited to reduce the chain quality and eventually the protocol security in PoS-based BFT consensus.

Similarly, as we stated before, rotating leader allows the protocol to be fair and improves defense against DoS attacks. An incentive mechanism cannot be implemented if a BFT protocol cannot provide a fair opportunity to every node to successfully add its proposal to the chain. \textcolor{black}{Below we present the limitations of prior work as summarized in Table \ref{Table: Protocol-Comparision}.}

\subsection{Prior Work}

\textbf{Lower Fault Tolerance.} Fast-BFT \cite{FastBFT} can achieve consensus in two communication steps, but the maximum number of Byzantine faults it can tolerate during the worst case is small $f = \frac{n-1}{5}$. Another recent proposal \cite{GoodCase-Latency-PODC} has further improved the results of \cite{FastBFT} and shown that Byzantine consensus can be achieved in two communication steps where maximum faults tolerated is $f = \frac{n+1}{5}$. 
A concurrent but similar work to \cite{GoodCase-Latency-PODC} is presented in  \cite{Revisiting-Optimal-Resilience-of-Fast-Byzantine-Consensus}, where the authors revisit optimal resilience for fast Byzantine consensus and present a tight lower bound for the resiliency $n = 5f-1$ (where $f = \frac{n+1}{5}$). VBFT improves this lower bound by relaxing the strong safety for Byzantine leader.
Zyzyvya achieves consensus in one communication round if all nodes are honest. This means the fault tolerance in this case is zero. Upon receipt of a proposal from the leader (leader), a node speculatively executes it hoping that every other node will also successfully execute the command. But it will fall back to a three communication step protocol if even one node out of $n$ is Byzantine or may not deliver the message on time. Moreover, Zyzvya depends on clients (trusts clients) for message propagation, making it unsuitable for the blockchain use case.

\textbf{View Change Complexity.} The computation cost of signature verification during the view change is one of the main causes of the bottleneck (during the view change) in the case of blockchain \cite{jalalzai2021fasthotstuff}. 
Reducing this cost can positively improve the view change performance. In PBFT\cite{Castro:1999:PBF:296806.296824}, Zyzyvya \cite{Kotla:2008:ZSB:1400214.1400236}, PBFT with the tentative execution \cite{castro01practicalPhD} and \cite{OptimisticBFT3f}, view change signature verification cost is quadratic. Similar to Fast-HotStuff \cite{jalalzai2021fasthotstuff}, VBFT has a linear authenticator verification cost. Unlike Fast-HotStuff (which needs more than two communication steps), VBFT only needs two communication steps to achieve consensus.

\textbf{Safety.}
\textbf{}\textcolor{black}{ 
PBFT with tentative execution \cite{castro01practicalPhD}
can achieve consensus in two communication steps, but it can only provide relaxed safety guarantees. In a protocol with a relaxed safety guarantee, a request is committed permanently only if at least $f+1$ (compared to $1$ honest node during strong safety) honest nodes commit it. \textcolor{black}{In PBFT with tentative execution, a request execution can be revoked, even if the leader (leader) is not malicious. As discussed, the revocation of honest blocks results in chain quality degradation. This can have negative effects on protocol security when an incentive mechanism is deployed over consensus.}}

\textbf{Rotating leader.}
Previous fast BFT consensus protocols do not support rotating leader feature \cite{Kotla:2008:ZSB:1400214.1400236, castro01practicalPhD, FastBFT, MinBFT, GoodCase-Latency-PODC}. Enabling rotating leader is desirable for blockchain use cases, as it allows fairness and improves resilience to DoS attacks.

\textbf{Trusted Execution Environment.}
MinBFT \cite{MinBFT} uses trusted hardware TEEs (Trusted Execution Environment) to prevent equivocation attack \footnote{An attack during which a malicious leader/leader proposes multiple blocks for the same height.}. 
Although TEEs are considered to be secure, some recent work (e.g., \cite{TEE-Vulnerability}) raises legitimate concerns about their security by identifying several vulnerabilities.


\begin{figure}
\footnotesize
    \captionof{table}{Comparison of Fast BFT Protocols}
\begin{tabularx}{\textwidth}{c@{\hspace{\tabcolsep}}cccccc}
 \cline{1-7}
\textbf{Protocol} &   \multicolumn{1}{p{0.4cm}}{\centering  
\textbf{CS}} &   \multicolumn{1}{p{0.4cm}}{\centering \textbf{FT}} &   \multicolumn{1}{p{0.4cm}}{\centering \textbf{SS-HP}} &   \multicolumn{1}{p{0.4cm}}{\centering \textbf{AV}} & \multicolumn{1}{p{0.4cm}}{\centering  \textbf{TEE}}  & \multicolumn{1}{p{0.4cm}}{\centering  \textbf{RP}}  \\\cline{1-7}

 Fast BFT\cite{FastBFT} & 2  & (n-1)/5 & Yes & $O(n)$ &  No & No \\ 
 Psync-BB\cite{GoodCase-Latency-PODC} & 2  & (n+1)/5 & Yes & $O(n)$ & No & No \\ 
 
  Min\cite{MinBFT} & 2  & (n-1)/3 & Yes & $O(n)$ & Yes & No \\ 
   PBFT-TE\cite{castro01practicalPhD} & 2  & (n-1)/3 & No & $O(n^2)$ & No & No \\ 
   { Zyzyvya \cite{Kotla:2008:ZSB:1400214.1400236}} & {1-3}  & {0 or (n-1)/3} & {No} & {$O(n^2)$} & {No} & No\\ 
  \textbf{ VBFT (this work)} & \textbf{2}  & \textbf{(n-1)/3} & \textbf{Yes} & \textbf{$O(n)$} & \textbf{No } & \textbf{Yes}\\\cline{1-7}
\label{Table: Protocol-Comparision}
\end{tabularx}
\caption{Comparision of VBFT with the PBFT and Fast BFT variants. Here, CS stands for communication steps, FT stands for fault tolerance, SS-HP stands for strong safety for honest leaders, AV stands for authenticator verification cost and RP for rotating leader paradigm.}
\end{figure}

\section{Definitions and Model}
\label{Section: System Model}
   VBFT operates under the Byzantine fault model. VBFT can tolerate up to $f$ Byzantine nodes where the total number of nodes in the network is $n$ such that $n=3f+1$. The nodes that follow the protocol are referred to as correct/honest nodes. 
 
   Nodes are not able to break encryption, signatures, and collision-resistant hashes. We assume that all messages exchanged among nodes are signed.

To avoid the FLP impossibility result \cite{Fischer:1985:IDC:3149.214121}, VBFT assumes a partial synchrony model \cite{Dwork:1988:CPP:42282.42283} with a fixed but unknown upper bound on the message delay.
   The period during which a block is generated is called an epoch. Each node maintains a timer. If an epoch is not completed within a specific period (called timeout period), then the node will time out and trigger the view change (changing the leader) process. The node then doubles the timeout value, to give enough chance for the next leader to drive the consensus. 

   \subsection{Preliminaries}
\noindent \textbf{View and View Number.}
In a distributed system, a \textit{view} is a particular configuration of nodes, where one node is designated as the leader and proposes a set of transactions to be executed. A \textit{view number} is a monotonically increasing identifier associated with a specific view. Each node in the system uses a deterministic function to map a view number to a node ID, which uniquely identifies the node within the system. Specifically, given a view number $v$ and a set of nodes $N = {n_1, n_2, ..., n_k}$, each node $n_i \in N$ uses a function $f_v$ to map $v$ to a node ID $id_i = f_v(v)$, such that the leader node for view $v$ is identified by the node with ID $id_i = f_v(v)$ in the set $N$.
 
 \noindent \textbf{Signature Aggregation.} VBFT uses signature aggregation \cite{Short-Signatures-from-the-Weil-Pairing,Boneh:2003,BDNSignatureScheme} in which signatures from nodes can be aggregated into a 
 a single collective signature of constant size. 
 Upon receipt of the messages ($M_1,M_2,\ldots, M_y$ where $ 2f+1 \leq y \leq n$)   with their respective signatures ($\sigma_1,\sigma_2,\ldots, \sigma_y$)
 the leader then generates an aggregated signature $\sigma \gets AggSign(\{M_i, \sigma_i\}_{i \in N})$. The aggregated signature can be verified by nodes given the messages $M_1,M_2,\ldots, M_y$ where $ 2f+1 \leq y \leq n$, the aggregated signature $\sigma$, and public keys $PK_1,PK_2,\ldots,PK_y$. Signature aggregation has previously been used in BFT-based protocols \cite{Castro:1999:PBF:296806.296824,Lamport:1982:BGP:357172.357176, Jalal-Window}. Any other signature scheme may also be used with VBFT as long as the identity of the message sender is known.

 \noindent \textbf{Block and Blockchain}
 Requests are also called transactions. In a blockchain, transactions are batched into blocks for efficiency. Each block contains a set of transactions and a hash pointer to the previous block, forming a chain of blocks called a blockchain.
 
 \noindent \textbf{Quorum Certificate (QC), Timeout Certificate (TC) and No-Commit Certificate (NC).}
 A quorum certificate (QC) is a set of $2f+1$ votes for a specific view from distinct nodes in a blockchain network. For instance, $QC$ is formed by collecting at least $2f+1$ votes from different nodes for a block. A block is considered certified when it receives enough votes to build its $QC$ or when the $QC$ itself is received. Similarly, a $TC$ is the collection of $2f+1$ \textsc{timeout} messages. The $QC$ with the highest view in the $TC$ is called highQC . An $NC$ (Negative Certificate) is crafted from $2f+1$ negative response messages. When a leader inquires if a node possesses the payload for a particular block header and the node does not possess said payload, it issues a negative response message in reply.

 \noindent \boldmath{\textbf{Unverified Block Header} ($U$).  Represents the header of a block for which a node has voted but has not yet received a $QC$. During view changes, the unverified block header is used to ensure that a block committed by fewer than $f+1$ nodes remains in the chain. 
 
   \subsection{Definitions}

\begin{definition}[Safety]
A protocol is considered \textbf{safe} in the presence of $f$ Byzantine nodes and under partial synchrony if, for any block $b$ committed during view $v$, the following conditions hold:
\begin{itemize}
     \item $b$ is the only block committed during view $v$.
     \item There exists no other block, denoted as $b^{'}$, that is committed during any view $v^{'}$ greater than $v$ and $b^{'}$ does not extend $b$ or act as a predecessor to $b$.
\end{itemize}
In simpler terms, the safety property of consensus guarantees that once a block 'b' is committed during view $v$, no conflicting (does not extend $b$ or precede $b$) block $b^{'}$ will ever be committed to any subsequent view $v^{'}$, and $b$ remains the only committed block during view $v$.
\end{definition}

\begin{definition}[Strong Safety]
A protocol is considered \textbf{S-safe} in the presence of $f$ Byzantine nodes and under partial synchrony if safety holds for any block $b$ committed by a single honest node during view $v$. It means that if a single honest node commits a block $b$ during view $v$, then eventually all other nodes will commit the same block $b$ during view $v$.
\end{definition}

\begin{definition}[Relaxed Safety]
A protocol is considered \textbf{R-safe} in the presence of $f$ Byzantine nodes and under partial synchrony if safety holds for any block $b$ committed by at least $f+1$ honest nodes during view $v$. It means that if $f+1$ honest nodes commit a block $b$ during view $v$, then eventually all other nodes will commit the same block $b$ during view $v$.
\end{definition}

  \begin{definition}[Liveness]
A protocol is considered to be \textbf{alive} if, in the presence of at most $f$ Byzantine nodes:

\begin{enumerate}
\item Each correct node eventually commits a block to the blockchain.
\item The blockchain continues to grow with additional blocks being appended to it over time.
\end{enumerate}

In other words, the protocol must guarantee progress even in the presence of up to $f$ Byzantine nodes.

\end{definition}

\SetKwFor{Upon}{upon}{do}{end}
\SetKwFor{Check}{check always for}
{then}{end}



\setlength{\textfloatsep}{1.5pt}

     \SetKwFor{Upon}{upon}{do}{end}
\SetKwFor{Check}{check always for}
{then}{end}
\SetKwFor{Checkuntil}{check always until}
{then}{end}
\SetKwFor{Continuteuntil}{Continue until}
{then}{end}
\begin{algorithm}
\DontPrintSemicolon
\SetAlgoLined

\caption{Utilities for Node $i$}
\label{Algorithm:Utilities}

\SetKwFunction{CreatePrepareMsg}{CreatePrepareMsg}
\SetKwFunction{BQC}{GenerateQC}
\SetKwFunction{CreateTC}{CreateTC}
\SetKwFunction{PipelinedSafeBlock}{SafetyCheck}
\SetKwProg{Fn}{Func}{:}{}

\Fn{\CreatePrepareMsg{type, v, s, d, hash, $NC$, $\Sigma$}}{
    $b.\text{type} \gets \text{type}$\;
    $b.v \gets v$\;
    $b.\text{block} \gets \text{hash}$\;
    $b.\text{parent} \gets d$\;
    $b.QC \gets \text{qc}$\;
    $b.NC \gets NC$\;
    $b.\Sigma \gets \Sigma$\;
    \Return $b$\;
}

\Fn{\BQC{$V$}}{
   $qc.\text{viewNumber} \gets m.\text{viewNumber}$ for $m \in V$\\
$qc.\text{block} \gets m.\text{block}$ for $m \in V$\\
$qc.\text{sig} \gets \text{AggSign}(qc.\text{type}, qc.\text{viewNumber},$ \\
$\quad \quad qc.\text{block}, i, \{m.\text{Sig} \,|\, m \in V\})$\;

    \Return $qc$\;
}

\Fn{\CreateTC{$timeout_{Set}$}}{
    $TC.\text{QCset} \gets$ extract QCs from $timeout_{Set}$\;
    $TC.\text{sig} \gets \text{AggSign}(curView, \{qc.\text{block} \,|\, qc.\text{block} \in TC.\text{QCset}\}, \{U.\text{block} \,|\, U.\text{block} \in \mathcal{U}\}, \{i \,|\, i \in N\}, \{m.\text{Sig} \,|\, m \in timeout_{Set}\})$\;
    $TC.\text{highQC} \gets \arg \max_{QC \in QCs} (QC.\text{v})$\;
    $TC.\text{highU} \gets \{ U \in \mathcal{U} \,|\, U.\text{parent} = \text{TC.highQC.block} \}$\;
    \Return $TC$\;
}

\Fn{\PipelinedSafeBlock{$b$, $qc$, $tc$}}{
    \If{$b.v == qc.v+1$}{
        \Return \textcolor{blue}{b extends highQC.block}\;
    }
    \If{$b.v > qc.v+1$}{
        \If{$tc.highU.s == qc.s+1$}{
            \If{TC.NC}{
                \Return \textcolor{blue}{b extends highQC.block $\land b.v == tc.v+1$}\;
            }
            \Else{
                \Return \textcolor{blue}{$b.U == tc.highU \land$ b extends highQC.block}\;
            }
        }
        \Else{
            \Return \textcolor{blue}{b extends highQC.block $\land b.v == tc.v+1$}\;
        }
    }
}
\end{algorithm}


\section{VBFT Protocol} 
\label{Section: Protocol}
The VBFT protocol operates in two modes, namely the normal and view change mode. The protocol operates in the normal mode where blocks are added into the chain until a failure is encountered. To recover from failure, VBFT switches to the view change mode. Upon successful completion of the view change mode, VBFT again begins the normal mode.

\subsection{Happy Path Execution}

\begin{algorithm*}
\SetKwFunction{CreatePrepareMsg}{CreatePrepareMsg}
\SetKwFunction{SafeProposal}{SafeProposal}
\SetKwProg{Fn}{Function}{:}{}
\small

\caption{Normal Execution for Node $i$}
\label{Algorithm: Normal-Mode}

\ForEach{curView $\gets$ 1, 2, 3, ...}{
    \If{i is leader}{
        \Upon{Receipt of 2f+1 \textsc{timeout} messages}{
            \If{There is $U.v == highQC.v+1$}{
                \If{Payload ($\Sigma$) for $U$ is present $\lor$ Receipt of payload for $U$}{
                    $B\gets$ \CreatePrepareMsg{Prepare, $v$, hash, $parent$, $qc$, $\perp$, $tc$, $\Sigma$ (from $U$)}\;
                }
                \Else{
                    Broadcast Request for $U$ payload\;
                    \Upon{Receipt of $2f+1$ Negative Response messages}{
                        $B\gets$ \CreatePrepareMsg{Prepare, $v$, hash, $parent$, hash, $qc$, $NC$, $\perp$, $\Sigma$}\;
                    }
                }
            }
            \Else{
                $B\gets$ \CreatePrepareMsg{Prepare, $v$, hash, $parent$, $qc$, $\perp$, $tc$, $\Sigma$}\;
            }
        }
        \Upon{Receipt of $2f+1$ votes}{
            $B\gets$ \CreatePrepareMsg{Prepare, $v$, hash, $parent$, $qc$, $\perp$, $\perp$, $\Sigma$}\;
        }
        Broadcast $B$\;
    }
    \If{i is a normal node}{
        \Upon{Receipt of Request for $U$ payload}{
            Send payload to the leader or Negative Response if payload is not available\;
        }
        \text{Wait for $B$ from leader (curView)}\;
        \If{SafetyCheck($B, B.QC, TC |\perp$) $\land b.v > curView$}{
            Broadcast vote $\langle v, s, h, parent, i \rangle$ for prepare message\;
            Increment $curView$ to $B.QC.v+1$\;
        }
        \Upon{Receipt of $2f+1$ votes for block $B$}{
            Execute new commands in $B$\;
            Respond to clients\;
        }
    }
    \Check{For timeout}{
        Increment $curView$ to $B.QC.v+1$\;
        Broadcast $\textsc{timeout}$ message\;
    }
}
\end{algorithm*}

In the context of the normal operational mode, we present a structured message pattern, as illustrated in Figure~\ref{fig:Message-pattern}. During this mode, the leader node is entrusted with the reception of  transactions originating from clients. Each client, denoted as $c$, transmits its signed transaction, conforming to the format $\langle REQUEST, o, t, c \rangle$, to all nodes within the network. Here, $o$ signifies the requested operation in the form of transaction by client $c$, and $t$ serves as a timestamp utilized by the leader to orderly sequence requests originating from said client.

The leader node, designated for the current view $v$, undertakes the creation of a proposal $b$, interchangeably referred to as a 'block' or 'pre-prepare message.' This proposal message is formally defined as $\langle \textsc{Pre-prepare}, v, block,parent,\textcolor{black}{qc}, tc, nc, \Sigma \rangle$, where it incorporates essential elements such as the view number $v$,and the executable transactions ($\Sigma$).

Notably, block and \textcolor{black}{$qc$ represent  the hash of the block and the Quorum Certificate ($QC$). The $QC$ is constructed from aggregated votes about the parent block. The block header, denoted as $U$, is succinctly expressed as $\langle \textsc{Pre-prepare}, v, block,parent \rangle$. It is imperative to observe that the inclusion of the $QC$ for the parent block is delegated to the subsequent rotating leader, a measure taken to ensure the proposal of the current block only occurs after the majority of nodes have extended their votes in favor of the parent block. $TC$ signifies the timeout certificate originating from the previous view and finds application exclusively in scenarios where the preceding view fails to certify a block, commonly referred to as the 'unhappy path'.}

$NC$ represents the certificate constructed from 'no-commit' messages. During the proposal of the first block following a failure, the newly appointed leader may seek to validate the existence of a block committed by a maximum of $f$ nodes, assuming that the $QC$ of said block remains unincorporated within $TC$. In the absence of such a block, the leader aggregates 'no-commit' messages from $2f+1$ nodes, subsequently forming an $NC$. Consequently, the $TC$ and $NC$ fields are intentionally designed for one-time use, exclusively in the context of system recovery following a failure or in the aftermath of traversing the 'unhappy path'. Detailed elaborations on this matter will be reserved for a dedicated discussion within the Unhappy Path Execution subsection (refer to subsection \ref{Subsection: Unhappy Path} for comprehensive insights).

The leader node disseminates the proposal to all network nodes during the pre-prepare phase, as delineated in Algorithm \ref{Algorithm: Normal-Mode}, line 23. Upon receiving a proposal, each individual node performs a  safety check through the employment of the $SafetyCheck()$ predicate.

The fundamental purpose of the $SafetyCheck()$ predicate is to ascertain that the proposed block $b$ extends from the most recently committed block. In the context of the 'happy path,' the first condition is met when ($b.v == qc.v + 1$), indicating the certification of the parent block ($b$'s parent). Conversely, in scenarios involving system recovery from an 'unhappy path,' the second condition ($b.v > qc.v + 1$) is satisfied, signifying the absence of certification for block $b$'s parent. It is crucial to underscore that our current discussion predominantly focuses on the 'happy path,' while a comprehensive exploration of the 'unhappy path' is reserved for subsequent subsections.

Upon the successful completion of the safety check, a node optimistically executes the block proceeds to send its vote in the format of $\langle Vote, v, block, state\_hash,\text{parent} \rangle$ to all other nodes in the network including the next leader.  The $state\_hash$ is the new state hash generated as the optimistic execution of the block.
The node affixes its signature to the vote, employing the tuple $\langle v, block,state\_hash, \text{parent} \rangle$. Having the resultant $state\_hash$ helps to detect if a node has a different execution state.

Upon the reception of $2f+1$ votes endorsing a particular block $b$ with the matching view, block hash, state hash, and the parent block, each node takes the significant step of committing the block, thereby initiating the commit phase. Additionally, each node assembles a Quorum Certificate ($QC$) based on the votes received for block $b$. This $QC$ serves a dual purpose: it acts as verifiable evidence of block commitment during view changes and fulfills the requirement for providing proof of commitment as necessitated by applications operating atop the VBFT protocol. \textcolor{black}{Furthermore, the inclusion of the parent block's $QC$ serves the strategic function of indicating that the leader intends to propose the subsequent block after the successful commitment of its parent. This measure also serves to deter subsequent leader nodes from inundating the network with redundant proposals. In this context, when a node receives a proposal, it is not required to validate the $QC$ within the proposal if it has already received $2f+1$ votes for the parent proposal within the protocol's critical path. This validation can be performed outside the critical path, if necessary, to address instances of Byzantine leadership, without compromising the protocol's overall performance.}

Additionally, each node that has successfully committed the block is tasked with dispatching a $Reply$ message \textcolor{black}{($\langle Reply, v, r, b \rangle$)} to clients whose transactions have been included in the block. Here, $r$ represents the hash of the latest state as a result of the committed block execution. Generally, it is the hash of the Merkle tree  that maintains the history of the blockchain. For a client to consider its request or transaction as committed, it must receive a minimum of $2f+1$ distinct $Reply$ messages affirming the commitment.

Likewise, upon collecting $2f+1$ votes, the newly appointed (rotating) leader assumes the responsibility of committing the block and subsequently proposing the next block in its predefined queue. In the event that a node fails to complete a view within the designated timeout period, it initiates the broadcasting of a $\textsc{timeout}$ message.

\begin{figure}
    \includegraphics[width=8cm,height=3cm]{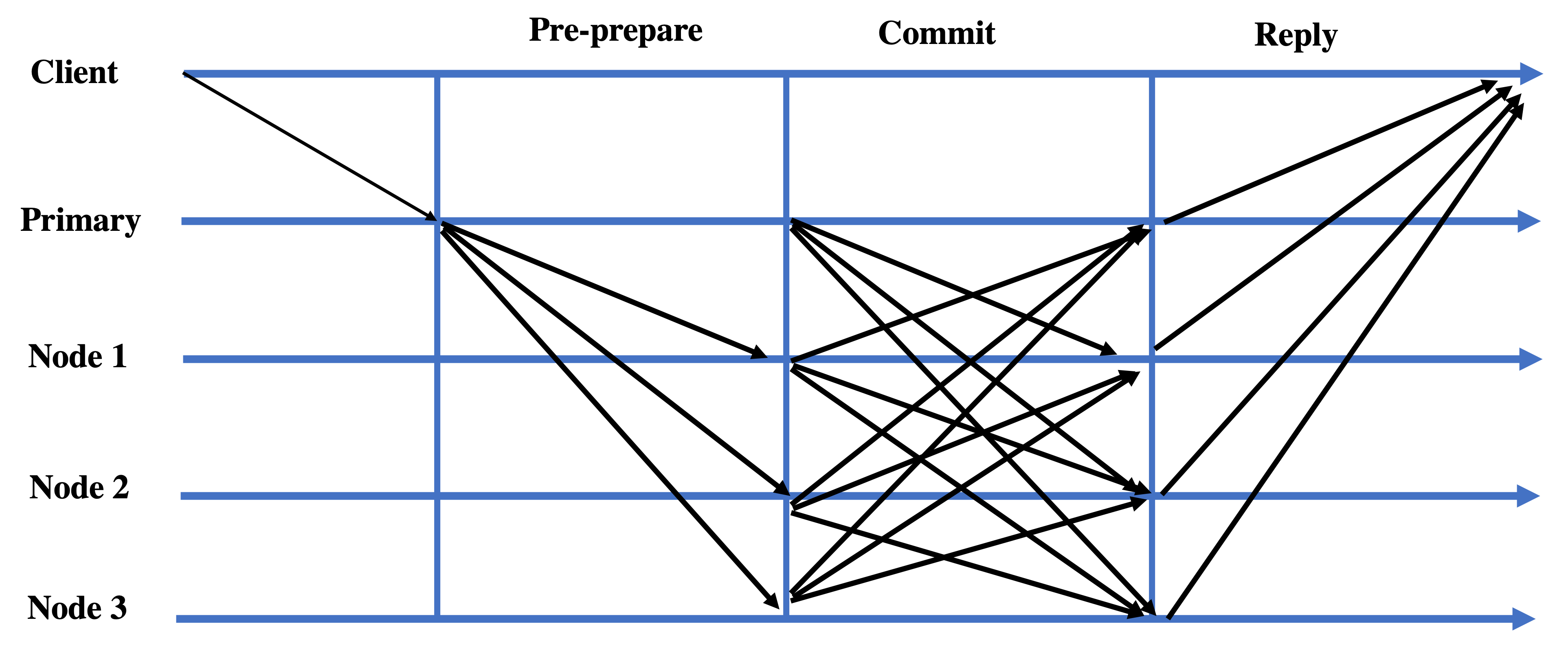}
    \caption{\textbf{VBFT Message Pattern}.}
    \label{fig:Message-pattern}
\end{figure}

\subsection{UnHappy Path Execution (Execution Failure)}   
\label{Subsection: Unhappy Path}
In the course of the 'happy path,' the succeeding rotating leader seamlessly proceeds to propose the next block once it has successfully committed the current one. Conversely, in the 'unhappy path,' a $\textsc{timeout}$ message is employed to effectuate the replacement of the failed leader with a new leader.

Node assignments to leader roles are facilitated through the utilization of a deterministic function that maps the view number to the leader ID. The selection of leaders can be orchestrated in a round-robin fashion \cite{HotStuff, Castro:1999:PBF:296806.296824, Jalal-Window} or via random selection \cite{Coin-Flipping-First-Paper, jalalzai2020hermes}.

In addition to designating a new leader, the process of view change necessitates the synchronization of information among nodes. More specifically, the newly appointed leader must ascertain the parent block of the block it intends to propose following the leader's failure. The block proposed immediately after a failure must invariably extend from the latest block that has been committed across the network \cite{SBFT, HotStuff, Castro:1999:PBF:296806.296824} (referred to as S-safety).

\textcolor{black}{As stipulated in VBFT, if a single honest node commits a block ($b$) proposed by an honest leader just before a view change, it is guaranteed that any subsequent leader must extend block $b$. This guarantee of S-safety for an honest leader is upheld while maintaining maximum resilience ($f = \frac{n-1}{3}$). 
VBFT not only guarantees R-safety for Byzantine leaders, meaning that a block proposed by a Byzantine leader and committed by at most $f$ honest nodes can be revoked if the Byzantine leader engages in equivocation just prior to the view change, but it also ensures that any Byzantine leader who is subjected to such revocation will be subsequently blacklisted. Moreover, this revocation mechanism has no detrimental impact on clients and does not lead to double-spending.}

If a node experiences a communication failure with the leader within a predefined time interval, if a command remains unexecuted, or if a node accumulates $f+1$ \textsc{timeout} messages for a particular view, it takes the initiative to broadcast a $\textsc{timeout}$ message $\langle \textsc{timeout}, v+1, qc, U\rangle$ (as described in Algorithm \ref{Algorithm: Normal-Mode}, lines 39-41). In this context, $qc$ signifies the Quorum Certificate ($QC$) for the most recently committed block, while $U$ represents the header of the latest uncommitted block for which the node has cast its vote.

Upon the reception of $2f+1$ \textsc{timeout} messages, the messages are aggregated into a collective entity known as the Timeout Certificate ($TC$) by the new leader. The $TC$ is succinctly represented as $\langle \textsc{Timeout-certificate},v, qcs, \mathcal{U}, highQC,highU\rangle$, consolidating the information contained within the $2f+1$ \textsc{timeout} messages. Within this structure, $qcs$ maintains a record of the views associated with each $QC$ received from distinct nodes.

On the other hand, if a node receives 
$f+1$ \textsc{timeout} messages, it also initiates a view change and broadcasts its \textsc{timeout} message. This proactive approach accelerates the recovery process from an unhappy path, enhancing the protocol's overall efficiency \cite{Castro:1999:PBF:296806.296824}.

In a similar fashion, $\mathcal{U}$ represents a multiset where each element corresponds to the most recent block header for which a node has cast its vote. Furthermore, we introduce the term $highU$, which signifies an element $U \in \mathcal{U}$ that satisfies the condition $U.parent == highQC.block$. This condition deliberately filters out scenarios where $highU.parent$ is an ancestor of $highQC.block$, emphasizing the significance of $highQC$. Conversely, when $highU.parent$ is a predecessor of the child of $highQC.block$, it highlights an impossibility and underscores that there exists no valid $U$ committed by only one node for more than a single view.

In VBFT, preserving safety during the unhappy path is crucial. This means that after a failure the first leader selected has to extend the latest committed block.
Recovering from an unhappy path involves three primary scenarios centered around a block proposed just before a failure:

\textbf{Scenario 1: Block Committed by at Least $f+1$ Honest Nodes}

In this scenario, if a block is committed by a minimum of $f+1$ honest nodes, VBFT ensures that the protocol extends the latest committed block. 
This can be achieved by extending the block of the  $highQC$ in $TC$.
    Since if $f+1$ honest nodes commit the latest block, then any combination of $2f+1$ $\textsc{timeout}$ messages will have at least one $QC$ for the committed block by at least $f+1$ \textcolor{black}{honest} nodes (for more details please refer to Lemmas \ref{Lemma:S-safe-Normal} and \ref{Lemma: S-safety-ViewChange-NoEq}). 
       Though the $QC$ of a block committed by at most $f$ nodes may also end up in $TC$, it is not guaranteed. If it did, then the new leader will extend the block committed by at most $f$ nodes.

\textbf{Scenario 2: Block Committed by At Most $f$ Honest Nodes}

In the VBFT protocol, specific scenarios may emerge wherein a block becomes committed by no more than $f$ honest nodes. To address such situations while maintaining safety guarantees, VBFT incorporates tailored mechanisms. The protocol ensures that a block achieves commitment by at least one honest node but not more than $f$ nodes by leveraging the concept of $U$ values found within the set $\mathcal{U}$ encapsulated within the Timeout Certificate ($TC$).

When a block attains commitment by fewer than $f+1$ honest nodes, there exists the possibility that its corresponding Quorum Certificate ($QC$) may not find representation among the $2f+1$ $\textsc{timeout}$ messages enclosed within $TC$. However, owing to the participation of $2f+1$ nodes, including $f+1$ honest ones, in casting their votes for the block, it is assured that at least one $U$ value contained within $TC$ pertains to the most recently committed block. It is worth noting that an alternative scenario exists where the block or request associated with this particular $U$ value may not have achieved commitment by a single honest node. In such an eventuality, the succeeding leader is tasked with the responsibility of procuring evidence to substantiate the non-commitment status of the block linked to $U$.

In this context, if the Quorum Certificate ($QC$) corresponding to the latest committed block is present within the $TC$, it is designated as the highest Quorum Certificate ($highQC$) contained within the $TC.$ This designation allows the subsequent leader to extend the block associated with $highQC.$ Conversely, in cases where the $QC$ of the latest committed block fails to manifest within the $TC,$ the $highU$ value enclosed within the $TC$ is employed as the foundational reference for the succeeding leader to extend the block associated with $highU.$ If the leader is in possession of the payload for $TC.highU,$ it will proceed to repropose it. Conversely, if the leader lacks the requisite payload, it will initiate a broadcast in the form of $\langle Req, v+1,  U \rangle).$ Nodes possessing the payload for $U$ will respond by transmitting the necessary payload. Given that the block $b$ has achieved commitment by at least one node, signifying that a minimum of $f+1$ honest nodes have encountered it and cast their votes in favor, the new leader is guaranteed to receive the essential payload.

It is worth emphasizing that a potential scenario exists in which the $highU$ contained within the $TC$ has not attained commitment by any node. In such instances, a node that has not observed the block associated with the $highU$ and consequently refrained from casting a vote for it will issue a negative response following the format $\langle N, v+1, U \rangle$ to the leader.

The leader is responsible for collecting and consolidating these negative responses into a No-Commit Certificate ($NC$), which is subsequently integrated into the block. The leader proceeds to extend $highQC$ by proposing a block. This mechanism serves as a  solution to scenario in which a block associated with $highU$ does not receive commitment from any honest node, thereby mandating the inclusion of a No-Commit Certificate ($NC$) within the $TC$. The $TC$ is subsequently embedded in the block proposed by the new leader. 

       \textbf{Scenario 3: Equivocation}

      In the event of equivocation, where a Byzantine leader proposes multiple blocks for the same view, the VBFT protocol introduces mechanisms to address potential safety concerns. If one of the proposed blocks attains commitment by a majority of $f+1$ honest nodes, the equivocation does not lead to the revocation of the block committed by these $f+1$ honest nodes. This safeguard stems from the fact that in a network with $n=3f+1$ nodes, there exist insufficient nodes ($n-(f+1)=2f$) to form a quorum ($2f+1$) capable of revoking the committed block, thus preserving the property of R-Safety.

However, should at most $f$ honest nodes commit one of the equivocated blocks, a scenario arises in which the block committed by at most $f$ nodes may face revocation. To elucidate the revocation process, consider a hypothetical situation involving two equivocated blocks, namely $b_1$ and $b_2$, both proposed by the Byzantine leader. In this context, assume that $b_1$ has garnered commitment from $f$ honest nodes. During the view change, the new leader receives $\textsc{timeout}$ messages from $2f+1$ nodes, with the assumption that the Quorum Certificate ($QC$) for $b_1$ does not feature among the $2f+1$ $\textsc{timeout}$ messages.

The $\textsc{timeout}$ messages received by the new leader may include references to $U$ values for both blocks, namely $b_1$ ($U_1$) and $b_2$ ($U_2$). In the ensuing decision-making process, if the new leader opts to recover the payload for $U_1$ and subsequently repropose it, the safety property (S-safety) remains intact. However, should the new leader choose to extend $U_2$, this action necessitates the revocation of $b_1$ by up to $f$ nodes that had previously committed it. Consequently, S-safety for the block proposed by the Byzantine leader is compromised.

It is essential to emphasize that in scenarios where equivocation occurs and at most $f$ honest nodes have committed a block, clients have not received the requisite $2f+1$ $Reply$ messages. Consequently, clients do not perceive the transactions within the block ($b_1$) as committed. Thus, revocation in this context does not result in double-spending, as a client can only obtain $2f+1$ $Reply$ messages for a transaction when its subsequent block attains commitment by at least $f+1$ honest nodes. As previously mentioned, if a block is committed by $f+1$ honest nodes, it remains immune to revocation, as affirmed by the safety proof. Therefore, to deter Byzantine nodes from engaging in equivocation, VBFT incorporates the practice of blacklisting equivocating leaders, as observed in related works such as \cite{jalalzai2020hermes,RBFT}, or the potential imposition of stake penalties when a Proof-of-Stake consensus mechanism is employed.

Furthermore, it is important to note that even in cases where a Byzantine leader resorts to equivocation, the revocation is confined to the leader's own block, which has not been acknowledged as committed by clients. This limited revocation does not adversely impact the chain quality and integrity of the blockchain.

\textbf{View Change Optimization (Computation)}
In classic BFT protocols such as SBFT \cite{SBFT}, Castro et al. \cite{Castro:1999:PBF:296806.296824}, and Jalal-Window \cite{Jalal-Window}, view changes can incur a quadratic cost in terms of signature verification within the $TC$ (Timeout Certificate).

Recent proposals have emerged to enhance the efficiency of message and computational costs associated with signatures during view changes, as exemplified in No-Commit Proofs \cite{No-Commit-Proofs} and the work by Jalalzai et al. \cite{jalalzai2021fasthotstuff}. Among these solutions, we have chosen to adopt the approach presented in \cite{jalalzai2021fasthotstuff} due to its simplicity and ease of implementation.

In Fast-HotStuff \cite{jalalzai2021fasthotstuff}, the authors demonstrate that the previously quadratic cost of signature verification can be reduced to linear. This reduction is achieved by verifying the aggregated signature of the $TC$ in conjunction with the aggregated signature of the $\textsc{timeout}$ message associated with the highest $QC$. Such verification is adequate to ensure the validity of the $TC$.

The verification of the aggregated signature of the $TC$ serves to confirm the presence of the $\textsc{timeout}$ message from at least $2f+1$ distinct nodes. Concurrently, the verification of the signature of the latest $QC$ ensures the validity of both the latest $QC$ and potential latest $U$ messages. Consequently, individual nodes are relieved from the necessity of verifying aggregated signatures from the remaining $QC$s.

While this optimization does not directly address the message complexity during view changes, it is worth noting that in practice, the cumulative size of view change messages is inconsequential when compared to the block size. Therefore, their impact on system performance remains negligible, barring scenarios where either the network size is exceptionally large or the block size is exceedingly small, in which case the view change message complexity may begin to affect performance\footnote{In instances of extensive network sizes or diminutive block sizes, the impact of view change message complexity on performance may become more pronounced.}.

\section{Proof of Correctness}
\label{Section: Proof}
In this section, we provide proof of safety and liveness for the VBFT protocol. We consider $S_h$ as the set of honest nodes in the network and $2f+1 \leq |S_h| \leq n$.
\subsection{Safety}
As stated before,  VBFT satisfies S-safety during the normal protocol. VBFT also satisfies S-safety during a view change in the absence of equivocation \textcolor{black}{or when the leader is honest}. Being S-safe also implies that the protocol is R-Safe.
The VBFT protocol may fall back to R-safety \textcolor{black}{with a Byzantine leader.} 
Below we provide lemmas related to the VBFT safety.

\begin{lemma}
\label{Lemma:S-safe-Normal}
VBFT ensures S-Safety during the happy path execution of the protocol.
\end{lemma}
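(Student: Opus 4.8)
The plan is to make quorum intersection the backbone of the argument. Since $n=3f+1$ and every commit in the happy path requires a quorum certificate built from $2f+1$ votes, any two such quorums share at least $(2f+1)+(2f+1)-(3f+1)=f+1$ nodes, and because at most $f$ nodes are Byzantine this intersection always contains at least one honest node. I would first record the two protocol facts this relies on: (i) during normal execution the primary of a view proposes a single block for the corresponding sequence --- any second proposal for the same sequence is equivocation, which honest replicas detect and which immediately drives the protocol into the view-change path rather than normal execution; and (ii) an honest replica signs at most one vote per (view, sequence) pair and commits a block only after assembling a $QC$ of $2f+1$ such votes.

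With these in hand I would verify the two clauses of the safety definition. For the \emph{sole-block} clause, suppose for contradiction that two distinct blocks $b\neq b'$ were both committed at the same sequence in view $v$; each would carry a $QC$ of $2f+1$ votes, the two voter sets would intersect in an honest replica, and that replica would have voted for both $b$ and $b'$ at the same (view, sequence), contradicting fact (ii). For the \emph{cross-view} clause I would argue by induction on the view number, using the $SafetyCheck$ predicate together with the rule that a rotating primary embeds the parent $QC$ in its proposal: once $b$ is certified in view $v$, the block of view $v+1$ can only pass $SafetyCheck$ by extending the latest certified block, so by induction every block committed in a view $v'>v$ carries $b$ on its chain and therefore extends (or precedes) $b$.

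Finally I would establish the distinctive \emph{strong} part of the statement --- that a commit by one honest replica forces a commit by all. The committing honest replica holds a $QC$ for $b$, i.e.\ a self-certifying aggregate of $2f+1$ signatures; moreover at least $f+1$ honest replicas voted for $b$ and, by fact (ii), will never contribute to a quorum for any conflicting block, so no competing block can ever accumulate $2f+1$ votes. It then remains to propagate the commit: under partial synchrony, once message delays are bounded, the $QC$ for $b$ reaches every honest replica --- either directly from the broadcast votes or, more robustly, through the parent-$QC$ that the next rotating primary attaches to the view-$(v+1)$ proposal --- whereupon each honest replica certifies and commits $b$. I expect this propagation step to be the main obstacle, since the $f$ Byzantine voters in the original quorum need not forward their votes to everyone; the resolution is precisely that a $QC$ is self-contained and verifiable, so the guaranteed $f+1$ honest votes plus a single relayed copy of the $QC$ suffice to carry the commit to all honest nodes without any cooperation from Byzantine voters.
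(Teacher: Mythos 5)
Your proposal is correct, and its backbone --- quorum intersection of two $2f+1$ vote sets in a network of $n=3f+1$ nodes, yielding at least $f+1$ common voters and hence at least one honest node that would have had to vote twice at the same sequence --- is exactly the paper's entire proof, which argues by contradiction that a block $b_1$ committed by one honest node and a block $b_2$ committed by another honest node at the same sequence must coincide. Where you genuinely go beyond the paper is in verifying the other clauses of the S-safety definition that the paper leaves implicit: the cross-view clause (your induction via \emph{SafetyCheck} and the embedded parent $QC$, which the paper defers in spirit to Lemma~\ref{Lemma: S-safety-ViewChange-NoEq} and the protocol description) and, more importantly, the ``eventually all honest nodes commit'' half of S-safety. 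Your observation there is a real subtlety the paper's proof skips: the $f+1$ guaranteed honest broadcast votes do not by themselves form a quorum at other replicas, so uniqueness alone does not give propagation; you correctly resolve it by noting that the $QC$ is self-certifying and is relayed by the next rotating primary inside its proposal, which is consistent with the protocol's stated mechanism. What each approach buys: the paper's proof is terse and isolates the one step where Byzantine behavior could violate agreement, trusting the protocol narrative for dissemination; yours discharges the full definition, at the cost of leaning on a liveness-flavored assumption (eventual delivery under partial synchrony, and that normal execution continues long enough for the next proposal to carry the $QC$) that strictly speaking mixes a progress argument into a safety lemma --- a reasonable price, since the paper's own definition of S-safety is itself stated in ``eventually'' terms.
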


\begin{proof}
  Let us assume, for the sake of contradiction, that there exist two committed blocks, $b_1$ and $b_2$, such that $b_1.parent = b_2.parent$. Nodes in a set $S_1$ have voted for $b_1$, and nodes in $S_2$ have voted for $b_2$.  Both $S_1$ and $S_2$ are of size $2f+1$. Therefore there is at least one honest node $j \in S_1 \cap S_2$ that have voted both for $b_1$ and $b_2$.
Additionally, assume that block $b_1$ is proposed during view $v$ and block $b_2$ is committed during view $v'$.

  We will consider the following cases:

  \textbf{Case 1: $v' == v$}

  Since $b_1$ is committed during view $v$, it implies that at least $2f+1$ nodes (in $S_1$) have voted for $b_1$. Consequently, these nodes have incremented their views to $v+1$ as part of the protocol (Algorithm \ref{Algorithm: Normal-Mode}, line 32). Since $v' == v$, any block $b_2$ proposed in the same view $v$ will
  not be able to collect $2f+1$ votes. This is due to the reason that there is at least one honest node $j \in S_1 \cap S_2$ 
  that has already voted for $b_1$ and will  reject $b_2$ as $b_{2}.v > curView==false$ according to the line 30 of Algorithm \ref{Algorithm: Normal-Mode}.

  \textbf{Case 2: $v' < v$}

  This case follows a similar argument to Case 3.1. If $v' < v$, the same reasoning applies, and any block proposed in the earlier view $v'$ will be rejected because $v' < curView$.

  \textbf{Case 3: $v' > v$ }

  During the happy path, a block $b_2$ is proposed in view $v'$ and must carry a Quorum Certificate (QC) $qc_2$ such that $qc_2.v + 1 = b_2.v$ and $qc_2.v \geq b_1.v$. If this condition is not met, the block $b_2$ will be rejected by the SafetyCheck() condition  line 17 of Algorithm \ref{Algorithm:Utilities} in node $j$, because it cannot extend highQC.block ($b_1$).  

  Therefore, it is not possible for block $b_2$ to carry a QC $qc_2$ with $qc_2.v < b.v$. Hence, during the happy path, VBFT is S-Safe.

  This concludes the proof, demonstrating that VBFT maintains S-safety throughout the happy path execution of the protocol.
\end{proof}

  


\begin{lemma}
    VBFT is S-Safe during view change for a correct/honest leader.
         \label{Lemma: S-safety-ViewChange-NoEq}

\end{lemma}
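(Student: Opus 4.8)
The plan is to prove the lemma in the same style as Lemma~\ref{Lemma:S-safe-Normal}, by showing that an honest primary conducting the view change is \emph{forced} to re-propose (or extend) the block $b$ committed by the single honest node, so that no conflicting block can ever be committed at the same sequence. First I would fix the setup: suppose an honest node commits $b$ at sequence $s$ in view $v$, and then a view change installs an honest primary $p$ for view $v+1$. Because a commit requires $2f+1$ votes, there is a set $S_b$ of $2f+1$ voters for $b$, of which at least $f+1$ are honest.

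The core of the argument is a quorum-intersection step that forces information about $b$ into the $AggQC$. The primary $p$ builds $AggQC$ from a set $S_{vc}$ of $2f+1$ \textsc{view-change} messages. Since $|S_b| + |S_{vc}| - n = f+1$, we have $|S_b \cap S_{vc}| \geq f+1$, so at least one honest node lies in $S_b \cap S_{vc}$. Having voted for $b$ at sequence $s$, this node's \textsc{view-change} message carries either the $QC$ for $b$ (if it already assembled one) or its voted value $U$ whose header equals that of $b$. Hence $AggQC$ necessarily contains a reference to $b$ at sequence $s$, either as a $QC$ or as a $U$ with $U.s = highQC.s + 1 = s$.

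Next I would analyse the honest primary's behaviour on this $AggQC$ following Algorithm~\ref{Algorithm: Normal-Mode}. If some \textsc{view-change} message supplies the $QC$ for $b$, then $highQC.s \geq s$ and $p$ extends the block certified by $highQC$, preserving $b$. Otherwise only $U$ for $b$ is present; since $p$ is honest it must honour $U$ because $U.s = highQC.s+1$, and I would show that the recovery succeeds: the $2f+1$ members of $S_b$ hold the payload of $b$, so among the responses to the recovery request at least one correct payload arrives (by the same intersection bound), while no $2f+1$ negative responses can ever be gathered, because the $\geq f+1$ honest holders of the payload answer positively, capping the negatives at $2f$. Consequently $p$ re-proposes exactly $b$ at sequence $s$, and it cannot attach a valid $qc_{nr}$ claiming that a genuinely committed block was never committed.

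Finally, because $p$ is honest it proposes a single block at sequence $s$, namely $b$ (or a block extending it); every honest node's $SafetyCheck()$ predicate accepts this proposal, so $b$ gathers $2f+1$ votes and is committed by all honest nodes, and no $b' \neq b$ can collect the $2f+1$ votes needed to commit at sequence $s$ (an honest voter rejects any proposal at $s$ that discards $b$, triggering a further view change instead). Appealing to Lemma~\ref{Lemma:S-safe-Normal} for the freshly started view then yields that $b$ remains the sole committed block at its sequence, establishing S-safety. The main obstacle I anticipate is precisely the case where only $U$, rather than the $QC$, survives into $AggQC$: there I must argue both that the honest primary is compelled to honour $U$ for $b$ and that payload recovery is guaranteed, while carefully separating out the equivocation scenario (two distinct $U$'s at the same sequence), which cannot arise from a single honest, non-equivocating proposal and is instead treated under R-safety.
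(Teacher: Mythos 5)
Your proposal is correct and follows essentially the same route as the paper's own proof: the quorum-intersection bound $|S_1 \cap S_2| \geq f+1$ forcing either the $QC$ for $b$ or a matching $U$ with $U.s = highQC.s+1$ into the $AggQC$, followed by the case split between extending $highQC$ and recovering the payload of $U$. Your additions — the counting argument that at most $2f$ negative responses can accumulate (so no valid $qc_{nr}$ can be forged against a committed block) and the closing appeal to Lemma~\ref{Lemma:S-safe-Normal} for uniqueness at sequence $s$ — are sound and actually make explicit steps the paper relegates to the protocol-description prose rather than the lemma proof itself.
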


\begin{proof}
  A block $b$ is committed by a single honest node $i$  during $v$. This means a set of nodes $S_1 \geq 2f+1$ have voted for the block $b$ during the view $v-1$.  Other nodes did not receive the block and triggered the unhappy path by timing out and switching to view $v$.
  During the unhappy path, the new leader collects $\textsc{timeout}$ messages from another set of $2f+1$ nodes ($S_2$) into $\textsc{TC}$ message. There is no guarantee that node $i$'s $\textsc{timeout}$ message is included in the $\textsc{TC}$ message built from $\textsc{timeout}$ message of nodes in $S_2$. If  node $i$'s $QC$ for block $b$ from its $\textsc{timeout}$ message is included in the $\textsc{new-view}$ message, then the new leader will simply propose a block that extends the block $b$ (Algorithm \ref{Algorithm: Normal-Mode} lines $18-22$). But in case the $QC$ for block $b$ is not included in $\textsc{new-view}$, then block $b$ can be recovered using information in the $U$ field of the $\textsc{new-view}$ messages in $\textsc{new-view}$ message (Algorithm \ref{Algorithm: Normal-Mode} line $9$, lines $13-15$).
 Each node in $S_2$ has included the latest $U$ (block header it has voted for) in its $\textsc{timeout}$ message it has sent to the leader. The leader has aggregated these messages into $\textsc{TC}$ message. Since $n=3f+1$, $S_1 \cap S_2 \geq f+1$. This means there is at least one honest node $j$ in the intersection of $S_1$ (that has voted for the block $b$) and $S_2$ (has its $\textsc{timeout}$ message $timeout_j \in \textsc{TC}$ message such that  $timeout_j.U.parent=highQC.block\footnote{$highQC$ is the $QC$ with highest sequence (latest $QC$) in the $TC$. $TC$ is the set of $QC$s in the $\textsc{new-view}$ message.}$). 
  Therefore, the new leader can retrieve the payload for $U$ (block $b$'s header) and repropose it Algorithm \ref{Algorithm: Normal-Mode}. 
\end{proof}




\begin{lemma}
In the VBFT consensus protocol, if a block is revoked due to equivocation and initially committed by at most $f$ honest nodes, then no client will consider it as committed.
\end{lemma}

\begin{proof}
We prove this lemma by considering two fundamental aspects of VBFT's safety mechanisms.

\textbf{Insufficient Votes for Revocation:} VBFT achieves R-safety by requiring the commitment of a block by at least $f+1$ honest nodes before it becomes resistant to revocation. In other words, a block cannot be revoked unless a significant portion of the network has already committed to it. This ensures that blocks committed by a minority of nodes remain vulnerable to revocation.

\textbf{Insufficient Replies for Revoked Blocks:} When a block is revoked due to equivocation, it implies that the block proposer has acted maliciously by contradicting their prior proposal. If such a block is initially committed by at most $f$ honest nodes, it will not receive sufficient reply messages from the network.

Consider the scenario where a block is committed by at most $f$ honest nodes and subsequently revoked due to equivocation by the same block proposer. In this case, there won't be enough honest nodes to generate a quorum of reply messages confirming the block's validity. The majority of the network, including the honest nodes, would recognize the equivocation and revoke their support for the conflicting block.

As a result, clients in the network would not receive the expected number of reply messages confirming the block's commitment. Clients typically rely on a threshold number of confirmations from the network to consider a block as committed. Without sufficient confirmations, the clients will not consider the block as committed.

Therefore, the combination of these safety mechanisms in VBFT ensures that if a block is revoked due to equivocation and was initially committed by at most $f$ honest nodes, it will not receive the necessary reply messages from the network for clients to consider it as committed. This guarantees that no client will consider such a block committed, thus demonstrating the R-safety of the VBFT consensus protocol for Byzantine leaders.
\end{proof}

\begin{lemma}
In the VBFT consensus protocol, liveness is guaranteed, ensuring that the protocol makes progress eventually and does not stall indefinitely.

\begin{proof}
We establish the liveness properties of VBFT through a series of techniques and considerations:

\textbf{Exponential Back-Off Timer:} VBFT incorporates an exponential back-off timer PBFT \cite{Castro:1999:PBF:296806.296824} for leader selection during view changes. This timer doubles its timeout period after each view change, providing the next leader with ample time to reach a decision on a request or block. This mechanism ensures that the protocol maintains progress and does not stall indefinitely.

\textbf{Timeout Messages:} To further enhance liveness, nodes in VBFT broadcast $\textsc{timeout}$ messages under specific conditions. If a node receives $f+1$ $\textsc{timeout}$ messages, each from a distinct node, 
it broadcasts a $\textsc{timeout}$ message.
This guarantees that at least one of the $\textsc{timeout}$ messages is from an honest node. Moreover, a node will not broadcast a $\textsc{timeout}$ message and trigger a view change if it receives at most $f$ $\textsc{timeout}$ messages. This restriction prevents the protocol from entering an indefinite state of view changes due to the unhappy path.

\textbf{Leader Selection:} In cases where leaders are selected in a round-robin manner, after at most $f$ Byzantine leaders, an honest leader will be chosen. This ensures that progress will be made eventually. Similarly, if leaders are selected randomly, the probability of selecting a Byzantine leader is $P_b=1/3$. Treating this as a Bernoulli trial, the probability of a bad event (selecting a Byzantine leader) occurring for $k$ consecutive views is $P_b^k$. As $k$ increases, $P_b^k$ rapidly approaches $0$, guaranteeing that an honest node will eventually be selected as the leader, allowing the protocol to make progress.

Hence, VBFT guarantees liveness by employing these techniques and ensuring that progress is achieved within a bounded time frame, preventing indefinite stalls in the protocol.
\end{proof}

\end{lemma}

\section{Conclusion}
\label{Section: Conclusion}
       In this paper, we have presented the VBFT consensus protocol, which achieves consensus during normal protocol operation in just two communication steps. First, we have shown that the previous optimal bound for fast BFT can be improved from $f \leq \frac{n+1}{5}$ to $f \leq \frac{n-1}{3}$ without using any trusted hardware. 
       Second, we have shown that VBFT guarantees strong safety for honest leaders and enjoys efficient view change. 
       Whereas, VBFT achieves  lower latency as compared to  HotStuff in the happy path and compared to PBFT-TE in the unhappy path.


\bibliographystyle{IEEEtran}
\bibliography{new}

\end{document}